\def\BibTeX{{\rm B\kern-.05em{\sc i\kern-.025em b}\kern-.08em
    T\kern-.1667em\lower.7ex\hbox{E}\kern-.125emX}}
\newtheorem{theorem}{Theorem}
\newtheorem{propo}{Proposition}
\def\url@leostyle{%
  \@ifundefined{selectfont}{\def\UrlFont{\sf}}{\def\UrlFont{\small\ttfamily}}}
\begin{document}

\title{UtilCache: Effectively and Practicably Reducing Link Cost in Information-Centric Network}

\author{\IEEEauthorblockN{Lemei Huang\IEEEauthorrefmark{1},
Yu Guan\IEEEauthorrefmark{1}\IEEEauthorrefmark{2},
Xinggong Zhang\IEEEauthorrefmark{1}\IEEEauthorrefmark{2}\IEEEauthorrefmark{3} and
Zongming Guo\IEEEauthorrefmark{1}\IEEEauthorrefmark{2}\IEEEauthorrefmark{3}}
\IEEEauthorblockA{\IEEEauthorrefmark{1}Institute of Computer Science Technology, Peking University, China, 100871}
\IEEEauthorblockA{\IEEEauthorrefmark{2}PKU-UCLA Joint Research Institute in Science and Engineering}
\IEEEauthorblockA{\IEEEauthorrefmark{3}Cooperative Medianet Innovation Center, Shanghai, China}
\IEEEauthorblockA{\{milliele, shanxigy, zhangxg, guozongming\}@pku.edu.cn}}

\maketitle

\begin{abstract}
Minimizing total link cost in Information-Centric Network (ICN) by optimizing content placement is challenging in both effectiveness and practicality. To attain better performance, upstream link cost caused by a cache miss should be considered in addition to content popularity. To make it more practicable, a content placement strategy is supposed to be distributed, adaptive, with low coordination overhead as well as low computational complexity. In this paper, we present such a content placement strategy, UtilCache, that is both effective and practicable. UtilCache is compatible with any cache replacement policy. When the cache replacement policy tends to maintain popular contents, UtilCache attains low link cost. In terms of practicality, UtilCache introduces little coordination overhead because of piggybacked collaborative messages, and its computational complexity depends mainly on content replacement policy, which means it can be $O(1)$ when working with LRU. Evaluations prove the effectiveness of UtilCache, as it saves nearly 40\% link cost more than current ICN design.
\end{abstract}

\begin{IEEEkeywords}
Information-Centric Network, Caching, Link Cost Minimization
\end{IEEEkeywords}

\section{Introduction}
Information-Centric Network (ICN) is a future Internet architecture proposed to achieve efficient content retrieval and distribution\cite{icn}. In ICN, caches are ubiquitous within routers, a.k.a.\ in-network caching. 
Whenever a packet traverses a link, there arises cost (propagation latency, money, bandwidth occupation etc.). Once a content request is satisfied at an intermediate cache on its path to content source, upstream links will never be traversed during this session, and thus the potential link cost are saved (Fig.~\ref{lcmpic}).
It is possible to minimize total link cost by optimizing content placement, which we refer to as \emph{Link Cost Minimization} (LCM) problem .


Current ICN design adopts a content placement strategy named Leave Copy Everywhere (LCE): contents are replicated and cached in every intermediate router on the response path to requester (e.g.\ from "I" to "R" in Fig.~\ref{lcmpic}). Cache replacement is done individually at each router with simple policy such as LRU or LFU, which tends to cache locally popular contents. Simple as it is, LCE works unsatisfactorily in link cost reducing \cite{26}.

\begin{figure}[thb]
\centering
\includegraphics[width=0.35\textwidth]{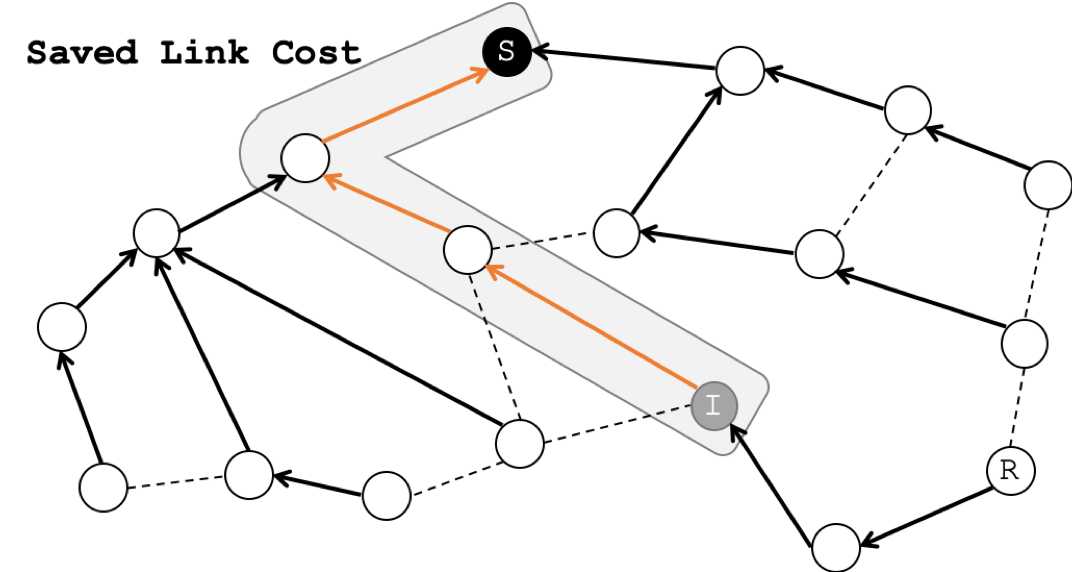}
\caption{\textbf{Link Cost Minimization in ICN:} Caches are ubiquitous in ICN. Requests for a specific content are forwarded to its designated source ("S"). If a request from R is satisfied at I, the potential link cost (from "I" to "S") is saved.
}
\label{lcmpic}
\end{figure}

We believe that designing a content placement strategy to reduce link cost faces two challenges:

\textbf{Effectiveness: }Most significantly, the strategy is supposed to perform well in reducing link cost. Many works \cite{magic,26}, including our previous work \cite{utilcache}, have noticed that in order to cut down link cost, in addition to popularity, upstream link cost caused by a cache miss should also be taken into account.

\textbf{Practicality: }First, the strategy must be \emph{distributed}, because a centralized controller is very likely to overburden due to network scale and content amount. Second, since user demand is dynamic and not a priori known, the strategy should be \emph{adaptive}. Third, there may be extra communication and storage overhead (collectively called \emph{coordination overhead}) introduced in order to attain higher performance, but overmuch coordination overhead could possibly in turn damage the performance. Last but not the least, \emph{computational complexity} of the strategy should be noticed as well. Computationally expensive algorithm is not applicable in large network, for which LRU is more commonly used than LFU.

Typically, LCE is a strategy well-practicable but ineffective. 
In this paper, we present \emph{UtilCache}, a distributed and adaptive content placement strategy in ICN, which yields satisfactory performance in reducing link cost, introduces little coordination overhead, and can be computationally efficient.
Our main contribution of this paper are summarized as:
\begin{itemize}
\item We present \emph{caching utility}, which quantifies the caching gain (i.e. saved link cost) of caching a content at a node (Sec.~\ref{globalmodel}). Caching contents by caching utility ensures the effectiveness of this strategy. Similar things are done in our previous work \cite{utilcache} roughly and intuitively. However, in this paper, we elaborate its derivation. Theoretical analysis begins with LCM problem formulation, and then we give a 1/2-approximation offline algorithm to solve it, from which we derive caching utility.
\item We propose a distributed and adaptive content placement strategy, UtilCache, which can work with any cache replacement policy to keep the contents with high caching utility in the cache (Sec.~\ref{algorithm}). UtilCache introduces little storage overhead compared with LCE, and yields the same computational complexity as LCE when they adopts same content replacement policy.
\end{itemize}

\section{Related Work}
\label{relatedwork}
The goal of optimizing content placement in ICN mainly focus on: 1) reducing cache redundancy and keeping content diversity, and 2) reducing total link cost. ProbCache\cite{probcache}, together with \cite{intraas,lessformore}, are all instances of cache redundancy reduction. In this paper, we focus on other works coping with link cost reduction.

Simple and practicable as it is, LCE has great weakness in its performance. \cite{26} has proven LCE's performance can be arbitrarily suboptimal when with simple cache replacement policies such as LRU, LFU. Both LFU and LRU in fact concerns only the local popularity of contents. As shown in Fig.~\ref{weakness}, LCE neglects the upstream link cost that may be paid due to a cache miss.

To improve effectiveness, proposed content placement strategies in regard to link cost reduction claim to focus on caching utility, which refers to the saved link cost on account of a cached content. To our knowledge, \cite{lcm1} is the first work concerning caching utility, yet without elaboration of it. In 2014, Ren et al.\ present MAGIC\cite{magic}, empirically and intuitively giving the definition of caching utility. Our previous work \cite{utilcache} roughly derives caching utility from a LCM problem formulation. Ioannidis et al.\ \cite{26} use Pipage Rounding \cite{pipage} method to solve LCM problem, and present an adaptive and distributed content placement strategy, PGA, which performs with optimality guarantees.

However, practicability becomes a drawback for these strategies. For example, content popularity should be a priori known in \cite{lcm1,utilcache}. Extra messages are disseminated in PGA, which increase traffic cost. To address the practicability issues, another strategy, GRD, is proposed in \cite{26}, where routers decide content placement based on the estimation of a sub-gradient and always keep the higher ones. The "sub-gradient" is similar to caching utility we refer to, but not exactly identical. The estimation of such sub-gradient is to maintain an exponentially weighted moving average (EWMA) of the upstream link cost that is piggybacked by data packets. GRD keeps adaptive and distributed yet cuts down coordination overhead. However, it faces two problems. First, to maintain EWMA, each time a data packet arrives, the estimation of all contents should be modified which is computationally expensive. Second, locally cached popular contents may be regarded as unpopular ones because requests for cached contents are locally satisfied, and thus there is not data packet from upstream, which indicates that EWMA of upstream link cost may not a satisfactory metric to measure caching utility.

\begin{figure}[tb]
\centering
\includegraphics[width=0.4\textwidth]{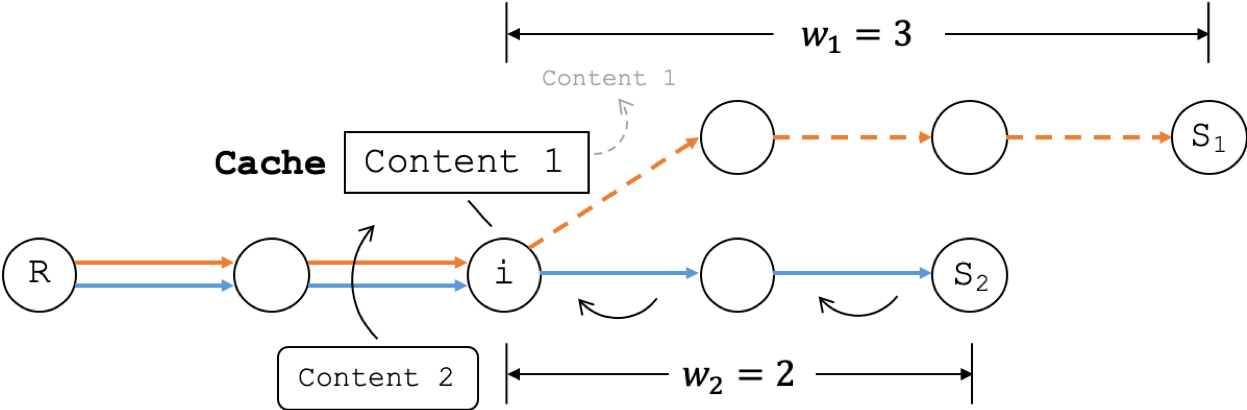}
\caption{\textbf{Weakness of LCE: } Suppose content 1 and 2 has the same popularity. Ideally it is better to cache content 1 at router $i$ because a cache miss for content 1 leads to more upstream link cost. In LCE, however, only popularity is considered, for which content 1 is replaced by content 2. }
\label{weakness}
\end{figure}

\section{Background}
The derivation of caching utility begins with a formulation of LCM problem in ICN, which we present in this section.

\subsection{Assumptions}



\textbf{Independent Reference Model. } Contents fall into a fixed catalogue of $|\mathbf{C}|$ objects, and requests for each content are generated with constant probability, independent from any past requests, a.k.a.\ Independent Reference Model (IRM) \cite{IRM1}. We refer to popularity as \emph{per-content request rate} hereafter.
IRM ignores the \emph{temporal locality}, a significant feature in real content request traces. Although a Shot Noise Model (SNM) is presented in \cite{temporal} to capture the dynamics of content popularity, Analytical models under SNM are shown to be challenging \cite{snmdifficult}. Therefore, we still consider the IRM in our model.


\textbf{Equal-sized Contents. } Every content in the network has unified size. Since the derivation is similar without this assumption, we still keep it for simplicity considerations.

\textbf{No Request Aggregation. } We suppose there exists no content request aggregation mechanism such as the well-known PIT in Name-Data Network (NDN)\cite{ndn}, as \cite{ccnramp} have expounded PIT does not deliver substantial benefits but is likely to cause problems counter-productively.

\textbf{Symmetric Path. } Content request and the corresponding data packet traverse the same path, which is a common assumption in the context of ICN, similarly to \cite{probcache,intraas,26}.

\textbf{Instantaneous Data Response. } Data packets are downloaded instantaneously, or at least in a small delay negligible compared with the request arrival process, which results in the equal arrival rate for both requests and contents.

\subsection{Modeling Link Cost Minimization Problem}

The whole cache network can be represented as an graph $\mathbf{G}=\langle\mathbf{N}, \mathbf{E}\rangle$, where $\mathbf{N}=\{1,...,n\} $ is a set of nodes as the routers and each edge $(i,j)$ in the edge set $\mathbf{E}$ indicates packets can be transmitted bidirectionally between node $i$ and $j$. Link cost, the most significant element in LCM problem, can thus be formulated as the weight $w_{ij}$ of each edge $(i,j)$.

It can be generally assumed that each node $i\in \mathbf{N}$ is equipped with a cache able to store any content $c\in\mathbf{C}$. Thereafter, we use "node", "router" and "cache" interchangeably. The cache capacity of node $i$ is $B_i$. Specifically, a set of nodes $\mathbf{S_c}$ are source nodes for content $c$, which stores $c$ permanently.

Cache decisions, \emph{the only independent variables} of our formulation, are denoted by a matrix $\mathbf{X}=[x^c_i]$ with $x^c_i\in\{0,1\}$. $x^c_i$ is binary indicating whether content $c$ is cached in node $i$ ($x^c_i=1$ if being cached).

Ioannidis et al.\ \cite{26} presents an elegant way to formulate the routing strategy in ICN. Let $p=\{p_1, p_2, ...,p_K\}$ be a simple (i.e. without loops) path end with a source node, none of the intermediate nodes in $p$ source node. Thus, all the possible requests can be defined as a set $\mathcal{R}$. Each $=(c,p)\in\mathcal{R}$ refers to the requests for $c$ generated from $p_1$, forwarded along $p$ and satisfied at $p_K$, which indicates that $p_K$ is one of the source nodes of content $c$, i.e. $p_K\in\mathbf{S_c}$.

Let $\lambda_{(c,p)}$ be arrival rate of requests falling into $(c,p)$. Total link cost of network $L(\mathbf{X})$ can thus be defined as:
\begin{equation}
\label{lc-raw}
L(\mathbf{X})=\sum_{(c,p)\in\mathcal{R}}\lambda_{(c,p)}\sum_{k=1}^{K-1}w_{p_{k+1}p_k}\prod_{k'=1}^k (1-x^c_{p_{k'}})
\end{equation}

\eqref{lc-raw} shows that a request for content $c$ on path $p$ contributes $w_{p_{k+1}p_k}$ to the total link cost when none of $p_k$'s downstream nodes cache the requested content. For simplicity, we only calculate link cost caused by data packets. Now we can formulate LCM problem as:

\begin{subequations}
\begin{alignat}{2}
\min\quad & L(\mathbf{X})&{}& \label{GM:1}\\
\mbox{s.t.}\quad &\sum_{c\in\mathbf{C}}{x^c_i}\le B_i&{}&,\forall i\in\mathbf{N}\label{GM:2}\\
&x_i^c\in\{0,1\}&{}&,\forall c\in\mathbf{C},\forall i\in\mathbf{N}\label{GM:3}
\end{alignat}
\end{subequations}

\eqref{GM:2} and \eqref{GM:3} are capacity and integrality constraints, respectively.

When all the caches are empty, then total link cost is constant, independent from $\mathbf{X}$. Let $L_0$ be the constant cost. We have $L_0 = \sum_{(c,p)\in\mathcal{R}}\lambda_{(c,p)}\sum_{k=1}^{K-1}w_{p_{k+1}p_k}$.

The benefit of caching is to save total link cost. We refer to the saved link cost as \emph{caching gain}, similarly in \cite{magic, 26}, which is defined as $G(\mathbf{X})=L_0-L(\mathbf{X})$. Then we have:

\begin{equation}
\label{slc}
G(\mathbf{X})=\sum_{(c,p)\in\mathcal{R}}\lambda_{(c,p)}\sum_{k=1}^{K-1}w_{p_{k+1}p_k}(1-\prod_{k'=1}^k (1-x^c_{p_{k'}}))
\end{equation}

To minimize total link cost is to maximize caching gain. The LCM problem can be equivalently formulated as:

\begin{subequations}
\label{U}
\begin{alignat}{2}
\max\quad & G(\mathbf{X})\label{U:1}\\
\mbox{s.t.}\quad &\mathbf{X}\in\mathcal{D}\label{U:2}
\end{alignat}
\end{subequations}

$\mathcal{D}$ is the set of all the $\mathbf{X}$ satisfying \eqref{GM:2}-\eqref{GM:3}. Shanmugam et al.\ has proven \eqref{U} is NP-hard \cite{femto2013}, for which an approximation algorithm is needed.

\section{Caching Utility}
\label{globalmodel}
In this section, we present the our definition of \emph{caching utility}, which quantifies the caching gain (i.e.\ saved link cost in LCM problem) of caching a content in a router. Caching utility is derived from an offline 1/2-approximation algorithm solving \eqref{U}.

\subsection{Offline Algorithm}

Before presenting the offline algorithm, we transform \eqref{U} into maximizing monotone submodular function subject to matroid constraints:

\textbf{Properties of \eqref{U} (abstract). }The integrality constraint \eqref{GM:3} enables that every cache decision $\mathbf{X}=\{x^c_i\}$ can be written as a set $A\subset \{f^c_i|c\in\mathbf{C}, i\in\mathbf{N}\}$, where
$x^c_i=1 \Leftrightarrow f^c_i\in A$.
Thus, $\mathcal{D}$ can be written as matroid constraints, according to the definition of partition matroids\cite{partionmatroid}. Moreover, $G(\mathbf{X})$ can be written as a set function\cite{setfunction} as well. It is proved by \cite{femto2013} that the set function is a monotone submodular function, which means \eqref{U} can be written as an optimization problem that maximizing monotone submodular function subject to matroid constraints. Due to space constraints, detailed proof is in Appendix~\ref{transform}.

\begin{algorithm}[tbh]         
    \caption{Offline Algorithm} 
    \label{centralizedalgo} 
    \begin{algorithmic}[1]                
    \REQUIRE ~~\\                          
        The network state;\\
    \ENSURE ~~\\                           
        The cache decisions,  $\mathbf{X}$;
    \STATE $\mathbf{X}=\{\mathbf{0}\}$; 
    
    \WHILE{there is $(c,i)$ s.t.\ $x^c_i=0$ and $\sum_{c'\in\mathbf{C}}x^{c'}_i<B_i$}
    \STATE $(c,i) = argmax_{x^c_i=0\text{ and }\sum_{c'\in\mathbf{C}}x^{c'}_i<B_i}m^c_i(\mathbf{X})$;
    \STATE $\mathbf{X} = \mathbf{X}|x^c_i=1$; \label{change}
    \ENDWHILE

    \RETURN $\{\mathbf{X}\}$;                
    \end{algorithmic}
\end{algorithm} 

Fisher et al.\ \cite{greedy} presents a simple greedy algorithm to solve the optimization problem that maximizing monotone submodular function subject to matroid constraints, with specific optimality guarantees, based on which we present an offline algorithm. Before introducing the algorithm, we define the marginal value of caching content $c$ at node $i$ as 
$$
m^c_i(\mathbf{X}) = G(\mathbf{X}|x^c_i=1)-G(\mathbf{X}|x^c_i=0)
$$, where $\mathbf{X}|x^c_i=1$ refers to the new matrix generated by changing 
$x^c_i$ of $\mathbf{X}$ to 1. 

Then the offline algorithm is discribed in Algorithm~\ref{centralizedalgo}, which keeps on choosing greedily a pair of content $c$ and node $i$ with highest marginal value under capacity constraints \eqref{GM:2}, and then storing $c$ in $i$. \cite{greedy} has proven caching gain obtain by Algorithm~\ref{centralizedalgo} is at least 1/2 of the optimal. 


\subsection{Derivation of Caching Utility}
\label{caching}
We find that the marginal value $m^c_i(\mathbf{X})$ indicates the gain of caching a new content $c$ at $i$. Therefore, we can use $m^c_i(\mathbf{X})$ to quantify the aforementioned caching gain, which we called it \emph{caching utility}.

\begin{theorem}
\label{cu}
we have $$
m^c_i(\mathbf{X}) = \lambda_i^c(\mathbf{X})\bar{w}^c_i(\mathbf{X})
$$, where $\lambda_i^c$ refers to request arrival rate for content $c$ at $i$ (i.e.\ the popularity of $c$ at $i$), $\bar{w}^c_i$ refers to average upstream link cost per request generated when $i$ doesn't cache $c$.
\end{theorem}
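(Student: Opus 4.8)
The plan is to compute $m^c_i(\mathbf{X})=G(\mathbf{X}|x^c_i=1)-G(\mathbf{X}|x^c_i=0)$ directly from the closed form \eqref{slc} and then recognise the resulting expression as a product of the two named quantities. The first observation is that $G$ depends on the single entry $x^c_i$ only through those summands $(c',p)\in\mathcal{R}$ with $c'=c$ and $i$ lying on $p$; all other terms are untouched by toggling $x^c_i$. I would therefore fix one request $(c,p)$ whose path visits $i$, writing $i=p_m$, and track where the factor $(1-x^c_{p_m})$ enters the inner sum $\sum_{k=1}^{K-1}w_{p_{k+1}p_k}\bigl(1-\prod_{k'=1}^k(1-x^c_{p_{k'}})\bigr)$.

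The difference is then taken term by term in $k$. For $k<m$ the product omits $(1-x^c_{p_m})$, so those summands cancel. For $k\ge m$ the product contains $(1-x^c_{p_m})$: setting $x^c_i=1$ annihilates the whole product, while $x^c_i=0$ leaves $\prod_{k'=1,\,k'\neq m}^{k}(1-x^c_{p_{k'}})$. Subtracting, the contribution of $(c,p)$ to $m^c_i(\mathbf{X})$ collapses to
\begin{equation*}
\lambda_{(c,p)}\sum_{k=m}^{K-1}w_{p_{k+1}p_k}\prod_{k'=1,\,k'\neq m}^{k}(1-x^c_{p_{k'}}).
\end{equation*}

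The key step is to split this product at position $m$ into a downstream factor $\prod_{k'=1}^{m-1}(1-x^c_{p_{k'}})$, which is independent of $k$ and comes out of the $k$-sum, and an upstream factor $\prod_{k'=m+1}^{k}(1-x^c_{p_{k'}})$. Summing over every request $(c,p)$ through $i$ then factors $m^c_i(\mathbf{X})$ as $\lambda_i^c(\mathbf{X})\,\bar{w}^c_i(\mathbf{X})$, where
\begin{equation*}
\lambda_i^c(\mathbf{X})=\sum_{(c,p):\,p_m=i}\lambda_{(c,p)}\prod_{k'=1}^{m-1}(1-x^c_{p_{k'}})
\end{equation*}
is the effective rate of $c$-requests surviving all downstream caches to reach $i$, and $\bar{w}^c_i(\mathbf{X})$ is the $\lambda_i^c$-weighted average of the per-path upstream costs $\sum_{k=m}^{K-1}w_{p_{k+1}p_k}\prod_{k'=m+1}^{k}(1-x^c_{p_{k'}})$. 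I expect the main difficulty to lie not in the differentiation but in the bookkeeping that ties these two algebraic factors to the paper's informal names: confirming that the downstream product is exactly the thinning of the request stream that defines the popularity of $c$ at $i$, and that the upstream product counts precisely the links a request still traverses after missing at $i$, so that the weighted average genuinely equals $\bar{w}^c_i(\mathbf{X})$.
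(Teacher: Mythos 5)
Your proposal is correct and follows essentially the same route as the paper's own proof: isolating the summands of $G$ affected by $x^c_i$, cancelling the $k<m$ terms, splitting the surviving product at position $m$ into a downstream factor (the thinned arrival rate $\lambda_i^c$) and an upstream factor, and recognising the result as $\lambda_i^c$ times the $\lambda_i^c$-weighted average upstream cost $\bar{w}^c_i$. The only differences are notational (your $m$ is the paper's $i'$, and the paper makes the multiply-and-divide normalisation explicit).
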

Detailed proof of Theorem~\ref{cu} is in Appendix~\ref{cachingu}. 
Interestingly, both $\lambda_i^c(\mathbf{X})$ and $\bar{w}^c_i(\mathbf{X})$ are independent from the cache decisions at node $i$, which means they can be regarded as constant when we keep cache decisions of other nodes unchanged and adjust only the cache of node $i$, due to which we define the \emph{caching utility} for content $c$ at $i$ as
\begin{equation}
U(c,i) = \lambda^c_i\bar{w}_i^c
\end{equation}

\section{UtilCache}
\label{algorithm}

Running Algorithm~\ref{centralizedalgo} in centralized way, however, encounters practical obstacles. In this section, we propose an adaptive, distributed cache strategy, \emph{UtilCache}, to solve LCM problem, based on a simple idea of maximizing caching utility.



\subsection{Reshaping "Popularity"}
\label{intuitive}
The important idea implies in Algorithm~\ref{centralizedalgo} is that: \emph{to maximizing the caching gain $G(\mathbf{X})$, we should choose those contents with highest caching utility to cache}. In Algorithm~\ref{centralizedalgo}, $(c,i)$ is chosen globally. However, in a distributed algorithm, to avoid extra communication overhead, nodes are supposed to decide content placement within their own cache. 

An intuitive thought is that each node $i$ maintains the estimation of $\lambda_i^c$ and $\bar{w}^c_i$, prioritizes contents by their caching utility, and always caches the top-$B_i$ ones. However, on the one hand, it is computational expensive. On the other hand, in both our previous work \cite{utilcache} and recent papers \cite{popexp1,popexp2} we find popularity estimation is a tough task and error of popularity estimation makes great difference to cache performance.

Inspired by caching algorithm proposed in \cite{lrus} for Web Caching, we come up with \emph{UtilCache}. UtilCache tackles the two obstacles by separating procedures of content retrieval and cache update. \emph{Content retrieval} remains unchanged: when a request arrives at a node and its designated content is cached, the node responses with the content, otherwise the request is forwarded upstream to next hop.

In UtilCache, each node $i$ maintains an estimation of $\bar{w}^c_i$, which will be described in Sec.~\ref{w}. Let $\bar{w}_{max}$ be the maximal per-request upstream link cost among all the contents at $i$. Every request for $c$ at $i$ becomes an \emph{effective request} with probability $\bar{w}^c_i/\bar{w}_{max}$.

\begin{figure}[tbp]
\centering
\includegraphics[width=0.4\textwidth]{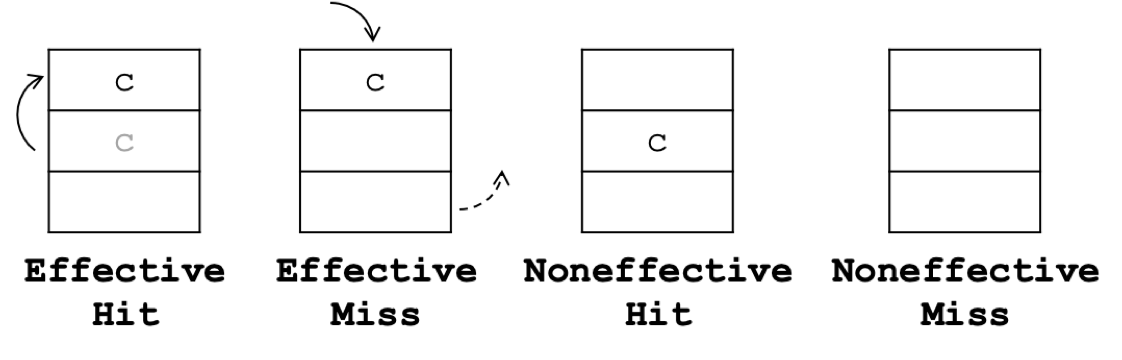}
\caption{\textbf{Cache Update: }Every request for $c$ at $i$ becomes \emph{effective request} with probability $\bar{w}^c_i/\bar{w}_{max}$. Cache update is different for effective and noneffective requests. Whenever an effective request arrives, cache is update accordingly, while noneffective request will not affect cache state. The example use LRU for content replacement. \label{update}}
\end{figure}

\emph{Cache update} differs between effective and noneffective requests.\emph{Effective requests} update cache state, while \emph{noneffective requests} do not, as if they have never arrived at the cache. Fig.~\ref{update} exhibits an example. Suppose LRU is used. When an effective request for $c$ results in a cache hit, $c$ becomes the most recently use content. Otherwise when a cache miss happens, $c$ is downloaded, cached and still the most recently use content. However, for a noneffective request, access time of the content remains unchanged. Even if the content is downloaded because of cache miss, it will not be cached.

Thus, cache-perceived popularity for content $c$ at node $i$ is equal to the popularity of effective requests, i.e.\ $$\lambda^c_i\frac{\bar{w}^c_i}{\bar{w}_{max}}=\frac{U(c,i)}{\bar{w}_{max}}$$. For a given node $i$, $\bar{w}_{max}$ is constant. Therefore in UtilCache, \emph{the most "popular" content in the cache's view, is the content with highest caching utility}.
UtilCache is compatible with any content replacement policy. Especially when working with those who tends to keep "popular" contents in cache (e.g.\ LFU, LRU), UtilCache can attain less link cost. 

\begin{table}[tb]
\centering
  \caption{Coordination Overhead and Complexity Comparision}
  \label{complex}
  \begin{threeparttable}
  \begin{tabular}{cccc}
    \toprule
      & \multirow{2}{*}{Computational Complexity}  &\multicolumn{2}{c}{Coordination Overhead} \\
      &   &  Space  &  Communication  \\
    \midrule
    GRD  &  $O(|\mathbf{C}|)$  &  $O(|\mathbf{C}|)$  &  0  \\
    U-LFU\tnote{1}  &  $O(log|\mathbf{C}|)$  &  $O(|\mathbf{C}|)$  &  0  \\
    U-LRU   &  $O(1)$  &  $O(|\mathbf{C}|)$ &  0  \\
        LCE-LFU  &  $O(log|\mathbf{C}|)$\tnote{2}&  0  &  0  \\
    LCE-LRU  &  $O(1)$  &  0  &  0    \\
    \bottomrule
  \end{tabular}
  \begin{tablenotes}
        \footnotesize
        \item[1] UtilCache with LFU.
        \item[2] In common implementation of LFU, $O(log|\mathbf{C}|)$ for eviction.
      \end{tablenotes}
  \end{threeparttable}
\end{table}

\subsection{Estimation of Per-request Upstream Link Cost}
\label{w}
In practice of UtilCache, each node is supposed to maintain the estimation of average upstream link cost $\bar{w}$. It occurs to us that the information of upstream link cost can be piggybacked by the returned data packets. We add a field for \emph{Upstream Link Cost} estimation in the data packet, named ULC. If a request from $s$ is satisfied at $t$, ULC of the data packet generated at $t$ is initialized to 0. Whenever it is transferred from $j$ to $i$ via link $(j,i)$, ULC in it is added by $w_{ji}$. Intermediate nodes can update their estimation by sniffing the ULC field of the arriving data packets. To emphasize the importance of fresh statistics, the average upstream link cost is calculated as a moving average:
$\bar{w}^c_i=\alpha w^c_i+(1-\alpha)\bar{w}^c_i$.

\subsection{Coordination Overhead and Computational Complexity}
Coordination overhead and computational complexity are summarized in Tab.~\ref{complex}, with some baselines.

Computational complexity of UtilCache depends mainly on the content replacement policy, because we introduce only $O(1)$ times of floating point computation.
Compared with LCE, UtilCache yields less link cost, and additionally introduces little coordination overhead: $O(|\mathbf{C}|)$ space cost for estimation of $\bar{w}$ and no extra packet for communication. With similar coordination overhead, UtilCache is more efficient than GRD, because GRD modifies the estimation of all contents each time a data packet arrives.
 
Moreover, UtilCache's compatibility with content replacement policy indicates that \emph{any progress made in content replacement policy to cache popular contents efficiently and effectively also does good to UtilCache.} For example, the $O(1)$ implementation of LFU \cite{o1} presented in 2010 can be applied in UtilCache and thus make the algorithm more computationally efficient while maintaining its effectiveness.

\begin{figure}[tb]
\centering
\subfigure[Caching gain when popularity skewness varies
\label{single-1}]{
\centering
\includegraphics[width=0.4\textwidth]{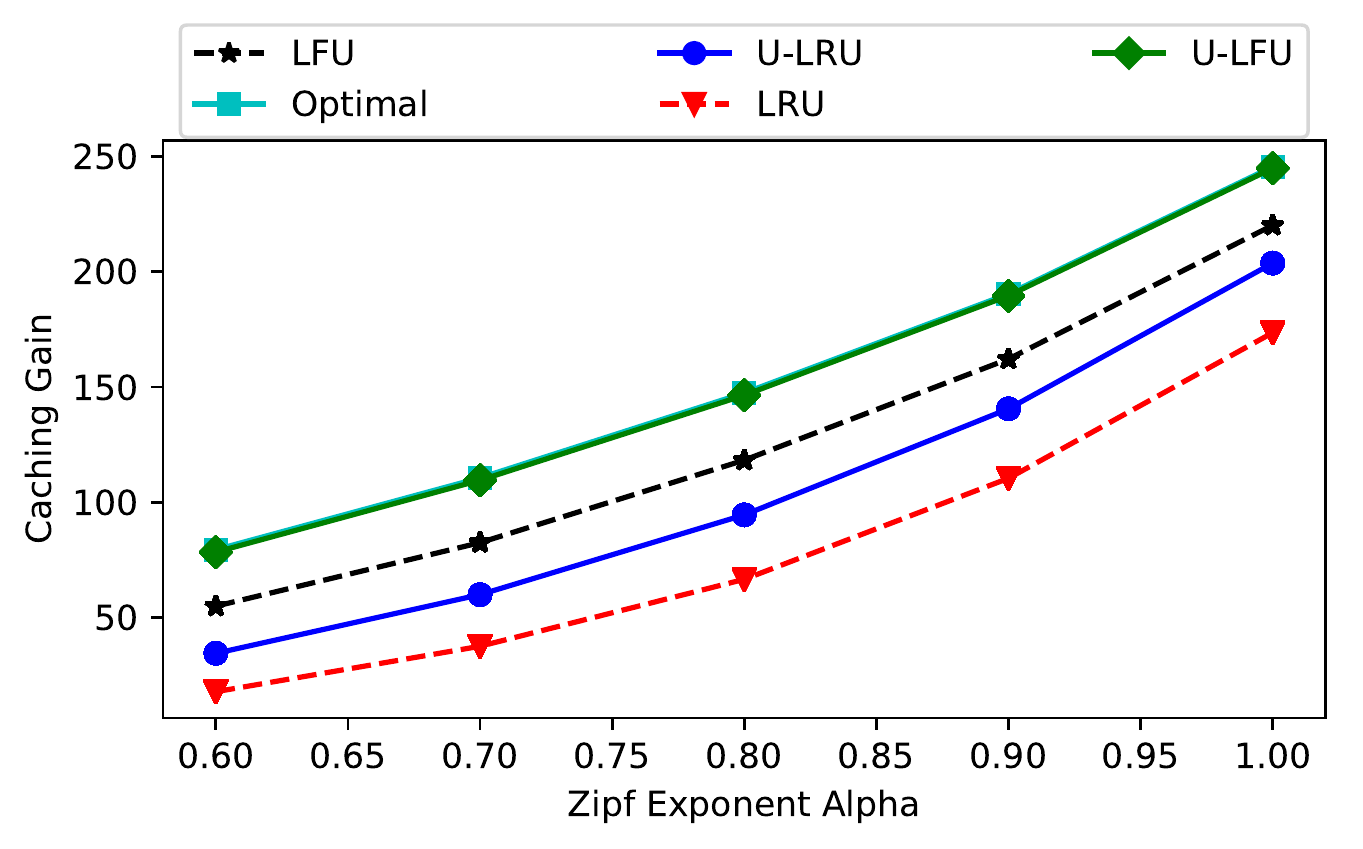}
}
\hspace{1pt}
\subfigure[Caching gain when cache size varies
\label{single-2}]{
\centering
\includegraphics[width=0.4\textwidth]{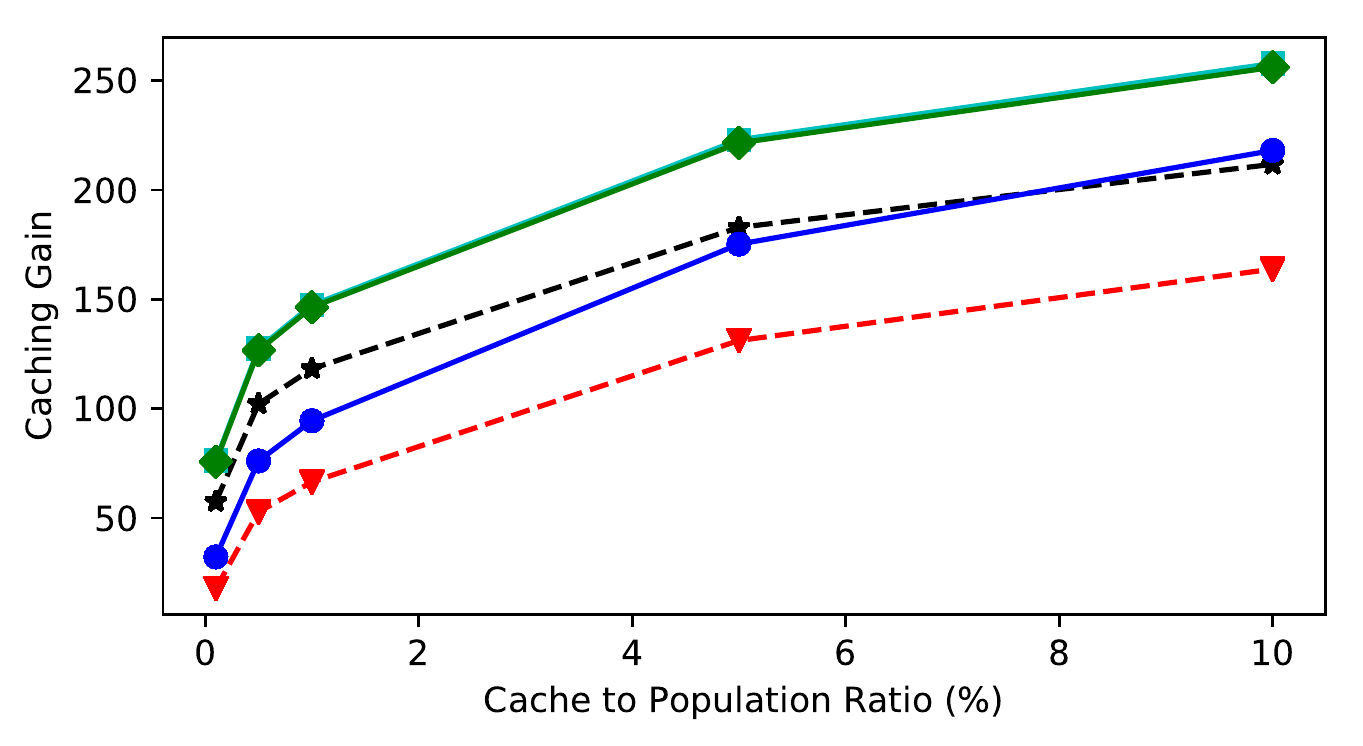}
}
\caption{Scenario 1: There is only one cache, and link cost added by a miss request for different content is different. When working with LFU, UtilCache performs nearly as well as the optimal solution (which is to always keep the contents with highest caching utility a prior).
\label{single}}
\end{figure}

\section{Performance Evaluation}
\label{eva}

\begin{figure*}[htbp]
\centering
\centering
\includegraphics[width=0.95\textwidth]{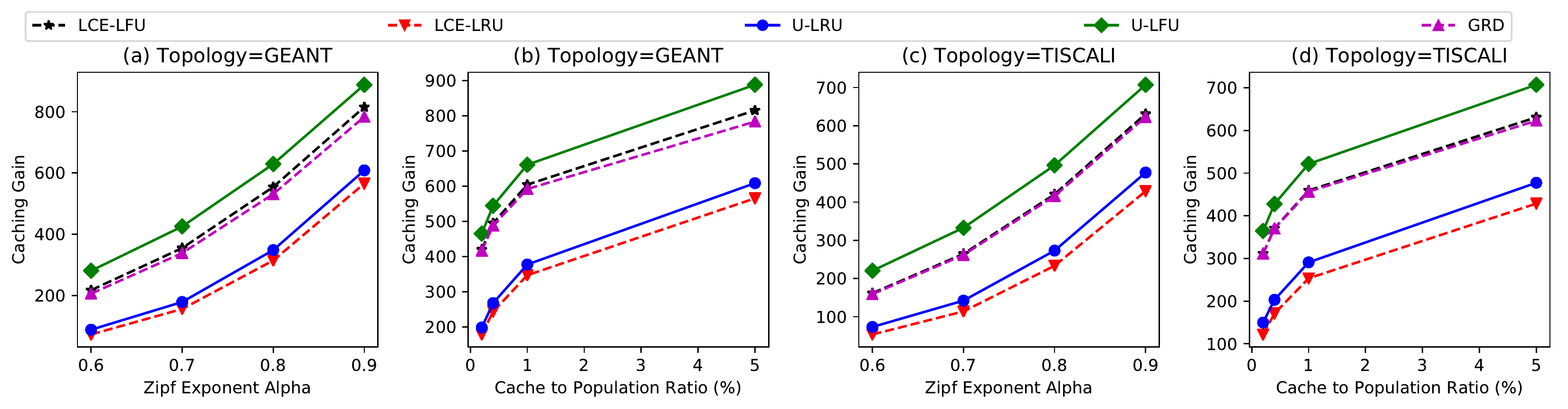}
\caption{Scenario 2: There is a cache network over specific topology, and cost of each link is assigned. UtilCache with LFU performs the best with most caching gain. }
\label{multi}
\end{figure*}

We conduct our evaluations on \emph{Icarus} \cite{icarus}, a discrete-event simulator offering flow-level simulations, to compare the performance of UtilCache, with benchmark cache strategies. All the evaluations are run multiple times.

\textbf{Networks. }We use two real network topologies: GEANT \cite{geant} and TISCALI\cite{tiscali}. GEANT is a core network interconnecting several European research institutes and universities, while TISCALI is from the RocketFuel dataset\cite{rock}. To show the performance difference better, we randomly assign each edge a \emph{link cost} choosing from $\{10, 100, 1000\}$.

\textbf{Workload. }There is a catalog of $3\times 10^5$ contents in the network. Popularity of contents follows Zipf distribution, and request arrival process is poisson with average rate of $12$/sec. We generate $3\times 10^6$ requests for cache warmup and another $6\times 10^6$ for measurement.

\textbf{Parameters. }We mainly consider the effect of: \emph{Zipf exponent $\alpha$}, \emph{cache to population ratio} and \emph{topology}. Zipf exponent $\alpha$ indicates the skewness of popularity distribution. When $\alpha$ increases, popular items become further popular and unpopular ones less. Cache to population ratio, introduced in \cite{hashrouting}, shows the proportion of total cache size to total content size. 

\subsection{Scenario 1: Single Cache}

Before evaluating the performance of UtilCache in a network, we first focus on the performance gap between UtilCache and the "intuitive thought" in Sec.~\ref{intuitive}. Let there be only one cache, and a miss request for $c$ at the cache leads to an average of $\bar{w}^c$ link cost. The "intuitive thought", i.e.\  to cache is top-$B$ contents with highest caching utility $\lambda^c w^c$, is the optimal solution (refered to \verb|Optimal|). Moreover, since LFU is the content replacement policy which can indeed cache the most popular contents, UtilCache is theoretically able to keep the most utilitarian contents when working with LFU. 

Fig.~\ref{single} shows the results. The gap between \verb|Optimal| and \verb|LFU| indicates the necessity of considering upstream link cost in addition to content popularity. UtilCache with LFU yields nearly the same caching gain as the optimal solution, but is more efficient. The gap between \verb|U-LFU| and \verb|U-LRU| is similar with that between \verb|LFU| and \verb|LRU|. Although the total caching gain attain by UtilCache with LRU is less, it has low computational complexity and can be executed efficiently.

\subsection{Scenario 2: Cache Network}
Now we run evaluations in a cache network, and compare UtilCache with LCE and GRD. Both LCE and UtilCache are implemented with LFU and LRU.

Caching gain under different circumstances are shown in Fig.~\ref{multi}. When the cache size increases, more contents are able to be cached to save link cost, for which caching gain increases correspondingly. When Zipf exponent $\alpha$ increases, it is more likely to request for the popular contents that are cached, for which the caching gain increases as well.

UtilCache performs much better when using same content replacement policy as LCE, which indicates we should consider the caching utility rather than merely popularity when reducing link cost. When working with LFU, UtilCache yields the best performance, which validates the efficacy of caching utility. With similar coordination overhead and efficiency, GRD performs worse than UtilCache (with LFU) because its estimation of caching utility has some inaccuracy.

\section{Conclusion and Future Work}
\label{conclusion}
Formulation of Link Cost Minimization problem in ICN reveals the importance of upstream link cost. In this paper, we present UtilCache, an adaptive, distributed content placement strategy to reduce total link cost, which is both effective and practicable. In UtilCache, each router tends to cache contents with highest caching utility. Caching utility is defined considering both content popularity and upstream link cost, which ensures the performance of UtilCache. Moreover, UtilCache introduces little coordination overhead because of piggybacked collaborative messages, and is compatible with any cache replacement policy, which means its computational complexity can be low when efficient content replacement policy is used. Evaluations validates the effectiveness of UtilCache as well. 
UtilCache can be further improved in terms of coordination overhead by maintaining only the information of in-cache contents, such as In-CacheLFU \cite{incachelfu}. In addition to these avenues, we are going to perform an extended evaluation of UtilCache in a scenario with dynamic demand. 




\bibliographystyle{IEEEtran}
\bibliography{sigproc}
%
\appendix
\subsection{Transforming LCM Formulation}
\label{transform}
We define a \emph{Ground Set} as:
$$E = \{f^c_i\mid i\in\mathbf{N}, c\in\mathbf{C}\}$$. The integrality constraint \eqref{GM:3} enables that every cache decision $\mathbf{X}=\{x^c_i\}$ can be written as a set $A\subseteq E$, where
\begin{equation}
\label{equal}
x^c_i=1 \Leftrightarrow f^c_i\in A
\end{equation}

\begin{propo}
\label{cons}
Let $G_i=\{f^c_i\mid c\in\mathbf{C}\}$. The constraints of \eqref{U} is equivalent to $\mathcal{I}=\{A\subseteq E\mid |A\cap G_i|\leq B_i,\forall i\in\mathbf{N}\}$, while $\mathcal{M}=(E,\mathcal{I})$ is a partition matroid.
\end{propo}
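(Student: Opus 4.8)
The plan is to prove the statement in two parts: first that the feasible region $\mathcal{D}$ of \eqref{U} coincides with $\mathcal{I}$ under the correspondence \eqref{equal}, and second that the set system $\mathcal{M}=(E,\mathcal{I})$ satisfies the matroid axioms with the blocks $G_i$ serving as the parts of the partition.

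For the equivalence of constraints, I would start from \eqref{equal} and observe that it is a bijection between binary matrices $\mathbf{X}$ and subsets $A\subseteq E$, so that the integrality constraint \eqref{GM:3} is already absorbed into the act of choosing a subset. The only remaining content is the capacity constraint \eqref{GM:2}. Since $G_i=\{f^c_i\mid c\in\mathbf{C}\}$ collects exactly the ground elements attached to node $i$, a direct count gives $|A\cap G_i|=\sum_{c\in\mathbf{C}}x^c_i$. Hence \eqref{GM:2} reads $|A\cap G_i|\le B_i$, and $\mathbf{X}\in\mathcal{D}$ holds if and only if the associated set $A$ belongs to $\mathcal{I}$.

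For the matroid claim, I would first verify that $\{G_i\}_{i\in\mathbf{N}}$ is a partition of $E$: the blocks cover $E$ because every $f^c_i$ lies in $G_i$, and they are pairwise disjoint because the node index uniquely labels each ground element, so $G_i\cap G_{i'}=\emptyset$ whenever $i\ne i'$. With this partition in hand, $\mathcal{I}$ has precisely the form of the independent sets of a partition matroid with capacities $B_i$. To keep the argument self-contained I would then check the three axioms directly: $\emptyset\in\mathcal{I}$ is immediate, and the hereditary property follows since $A'\subseteq A$ gives $|A'\cap G_i|\le|A\cap G_i|\le B_i$ for every $i$.

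The one step demanding a short argument, and the main (if mild) obstacle, is the exchange axiom. Given $A,B\in\mathcal{I}$ with $|A|<|B|$, I would decompose their cardinalities across the partition as $|A|=\sum_{i}|A\cap G_i|$ and likewise for $B$; a counting argument then forces some block $G_j$ with $|A\cap G_j|<|B\cap G_j|\le B_j$. In that block $(B\setminus A)\cap G_j$ is nonempty, so choosing any $e$ from it yields $|(A\cup\{e\})\cap G_j|=|A\cap G_j|+1\le B_j$ while leaving all other intersections unchanged, whence $A\cup\{e\}\in\mathcal{I}$. This verifies all three axioms and completes the identification of $\mathcal{M}$ as a partition matroid.
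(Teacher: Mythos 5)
Your proposal is correct, and its first half is exactly the paper's argument: use the bijection \eqref{equal} to absorb the integrality constraint \eqref{GM:3}, count $|A\cap G_i|=\sum_{c\in\mathbf{C}}x^c_i$, and conclude that \eqref{GM:2} becomes $|A\cap G_i|\le B_i$, so $\mathcal{D}$ corresponds to $\mathcal{I}$. Where you diverge is the matroid claim. The paper treats ``partition matroid'' as a defined object from the literature: it recalls that a partition matroid is, by definition, a set system $\mathcal{I}'=\{A\subseteq E'\mid |A\cap E'_i|\le\beta_i\}$ over a partition $\{E'_i\}$ of the ground set, observes that $\{G_i\}$ partitions $E$, and stops --- the fact that such a system satisfies the matroid axioms is delegated to the cited reference. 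You instead verify the axioms from scratch: nonemptiness, heredity, and the exchange property via the counting argument that $|A|<|B|$ forces some block with $|A\cap G_j|<|B\cap G_j|\le B_j$, from which an augmenting element in $(B\setminus A)\cap G_j$ exists and respects all capacities. Your exchange argument is sound (disjointness of the blocks is what guarantees the other intersections are untouched, and you rightly establish that the chosen $e$ lies outside $A$). What your route buys is self-containedness: the proposition no longer leans on an external characterization, which matters if a reader wants to confirm that the greedy guarantee of Algorithm~\ref{centralizedalgo} really applies. What the paper's route buys is brevity and emphasis: the only content-specific facts are the counting identity and the disjointness of $\{G_i\}$, and everything else is standard, so citing the definition keeps the appendix short without loss of rigor.
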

\begin{proof}
Suppose $A$ and $\mathbf{X}$ are equivalent, which means \eqref{equal} is satisfied.
$|A\cap G_i|$ indicates how many elements in $\{f^c_i\mid c\in\mathbf{C}\}$ are in $A$ as well. Notice that here router $i$ is regarded as constant. Thus $|A\cap G_i|$ also refers to how many $x^c_i=1$ when $i$ is fixed, which is exactly $\sum_{c\in\mathbf{C}}x^c_i$. Therefore, we have $$\sum_{c\in\mathbf{C}}x^c_i\le B_i\Leftrightarrow |A\cap G_i|\leq B_i$$.
Once $\mathbf{X}$ is a feasible solution to \eqref{U}, the equivalent $A$ must satisfy: $|A\cap G_i|\leq B_i, \forall i\in\mathbf{N}$, for which constraints of \eqref{U} can be represented by a set of feasible solutions as well, i.e. $\mathcal{I}=\{A\subseteq E\mid |A\cap G_i|\leq B_i,\forall i\in\mathbf{N}\}$. 

\emph{Partition matroid} $(E',\mathcal{I}')$is a typical instance of matroids. In a partition matroid, the ground set $E'$ is partitioned into disjoint sets $E'_1$, $E'_2$,...,$E'_l$ and 
$$\mathcal{I}' = \{A\subseteq E'\mid |A\cap E'_i|\leq \beta_i, \forall i=1,...,l\}$$, for constant parameters $\beta_1, \beta_2,..., \beta_l$\cite{partionmatroid}. Obviously, $\{G_i\}$ is a partition of $E$, for which $\mathcal{M}=(E,\mathcal{I})$ is a partition matroid.
\end{proof}

\begin{propo}
\label{object}
The objective function of \eqref{U} can be written as a monotone submodular function.
\end{propo}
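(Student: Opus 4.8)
The plan is to pass to the set-function form of $G$ through the correspondence \eqref{equal} and then exhibit $G$ as a nonnegative combination of elementary coverage indicators, each of which is readily seen to be monotone and submodular. First I would substitute $x^c_i = \mathbf{1}[f^c_i \in A]$ into \eqref{slc}. Each factor becomes $1 - x^c_{p_{k'}} = 1 - \mathbf{1}[f^c_{p_{k'}} \in A]$, so the product $\prod_{k'=1}^{k}(1 - x^c_{p_{k'}})$ equals $1$ exactly when none of the distinct ground-set elements $f^c_{p_1}, \dots, f^c_{p_k}$ belongs to $A$, and equals $0$ otherwise. Writing $S_{(c,p),k} = \{f^c_{p_1}, \dots, f^c_{p_k}\} \subseteq E$, the bracketed term is therefore the indicator $g_{(c,p),k}(A) := \mathbf{1}[A \cap S_{(c,p),k} \neq \emptyset]$, and we obtain the set-function representation $G(A) = \sum_{(c,p)\in\mathcal{R}} \lambda_{(c,p)} \sum_{k=1}^{K-1} w_{p_{k+1}p_k}\, g_{(c,p),k}(A)$.

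The crux is to verify that each $g_{(c,p),k}$ is monotone and submodular; these are exactly the (unweighted) coverage functions that serve as the canonical examples of monotone submodularity. Monotonicity is immediate, since $A \subseteq B$ and $A \cap S \neq \emptyset$ force $B \cap S \neq \emptyset$. For submodularity I would check the diminishing-returns inequality $g(A \cup \{e\}) - g(A) \geq g(B \cup \{e\}) - g(B)$ for all $A \subseteq B$ and all $e$, by cases on whether $e \in S$. If $e \notin S$ both marginal gains vanish; if $e \in S$ the marginal gain equals $1 - \mathbf{1}[A \cap S \neq \emptyset]$ at $A$ and $1 - \mathbf{1}[B \cap S \neq \emptyset]$ at $B$, and $A \subseteq B$ gives $\mathbf{1}[A \cap S \neq \emptyset] \leq \mathbf{1}[B \cap S \neq \emptyset]$, so the gain at $A$ dominates. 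Intuitively this is the ``cover once'' property: once some downstream node already holds $c$, caching another copy upstream on that path saves nothing more.

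To finish I would appeal to the closure of monotone submodular functions under nonnegative linear combinations: because every coefficient $\lambda_{(c,p)} w_{p_{k+1}p_k}$ is nonnegative, $G$ inherits both properties from the $g_{(c,p),k}$, and $G(\emptyset)=0$ together with monotonicity yields nonnegativity as well. I do not expect a deep obstacle here; the one point requiring care is the case analysis in the submodularity step, and in particular confirming that the elements $f^c_{p_{k'}}$ indexing $S_{(c,p),k}$ are genuinely distinct ground-set elements — which holds because $p$ is a simple path and $c$ is fixed — so that $g_{(c,p),k}$ really behaves as a coverage function rather than double-counting repeated elements.
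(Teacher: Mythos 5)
Your proof is correct, and it takes a genuinely different route from the paper. The paper does not prove Proposition~\ref{object} directly at all: it observes that the objective \eqref{U:1} is ``very similar'' to the objective in the femtocaching problem and cites Shanmugam et al.~\cite{femto2013}, who established monotone submodularity for that function; the argument is an appeal to an external result by analogy. You instead give a self-contained, elementary verification: substituting $x^c_i=\mathbf{1}[f^c_i\in A]$ into \eqref{slc}, recognizing each term $1-\prod_{k'=1}^{k}(1-x^c_{p_{k'}})$ as the coverage indicator $\mathbf{1}[A\cap S_{(c,p),k}\neq\emptyset]$, checking monotonicity and diminishing returns for each indicator by cases on $e\in S$, and invoking closure of monotone submodular functions under nonnegative linear combinations (all coefficients $\lambda_{(c,p)}w_{p_{k+1}p_k}$ being nonnegative). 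This is essentially the weighted-coverage-function argument that underlies the cited result, so the mathematical content agrees, but your version buys rigor and independence: it removes the informal ``similar objective'' step, which is the weakest link in the paper's proof, and it exposes the structural reason for submodularity (the ``cover once'' property along each path). The paper's citation-based proof buys only brevity. One small remark: the distinctness of the elements $f^c_{p_{k'}}$, which you flag as the point requiring care, is actually not needed --- since the $x^c_i$ are binary, repeated factors $(1-x^c_{p_{k'}})$ would be idempotent and the product would still equal $\mathbf{1}[A\cap S_{(c,p),k}=\emptyset]$ --- though simplicity of $p$ does make it hold.
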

\begin{proof}
Although we model different problems, our objective function \eqref{U:1} is very similar with the objective function in \cite{femto2013}, which has been proven to be a monotone submodular function by Shanmugam et al.
\end{proof}

Now we can draw a conclusion from Proposition~\ref{cons} and \ref{object} that LCM problem formulation \eqref{U} can be written as maximizing a monotone submodular function with matroid constraints.

\subsection{Proof of Theorem~\ref{cu}}
\label{cachingu}
From the definition of $G(\mathbf{X})$ we know $x^c_i$ affects only the caching gain of $(c,p)$ where router $i$ is in path $p$. Let $\mathcal{R}_i^c=\{(c,p)\mid i\in p, (c,p)\in\mathcal{R}\}$ and $p_{i'}=i$ when $i\in p$. We define
\[
\begin{matrix}
G_i^c(\mathbf{X}) &=& \sum\limits_{(c,p)\in\mathcal{R}_i^c}\lambda_{(c,p)}\sum\limits_{k=1}^{K-1}w_{p_{k+1}p_k}(1-\prod\limits_{k'=1}^k (1-x^c_{p_{k'}}))\\
&=& \sum\limits_{(c,p)\in\mathcal{R}_i^c}\lambda_{(c,p)}[\sum\limits_{k=1}^{i'-1}w_{p_{k+1}p_k}(1-\prod\limits_{k'=1}^k (1-x^c_{p_{k'}}))+\\
&&\sum\limits_{k=i'}^{K-1}w_{p_{k+1}p_k}(1-\prod\limits_{k'=1}^{k} (1-x^c_{p_{k'}}))]
\end{matrix}
\]. Thus, we have
\[
\begin{matrix}
m^c_i(\mathbf{X}) &=& G(\mathbf{X}|x^c_i=1)-G(\mathbf{X}|x^c_i=0)\\
&=&G_i^c(\mathbf{X}|x^c_i=1)-G_i^c(\mathbf{X}|x^c_i=0)\\
&=&\sum\limits_{(c,p)\in\mathcal{R}_i^c}\lambda_{(c,p)}\sum\limits_{k=i'}^{K-1}w_{p_{k+1}p_k}\prod\limits_{k'=1}^{k}(1-x^c_{p_{k'}})\\
&=&\sum\limits_{(c,p)\in\mathcal{R}_i^c}\lambda_{(c,p)}\prod\limits_{k'=1}^{i'-1}(1-x^c_{p_{k'}})(w_{p_{i'+1}p_{i'}}\\
&&+\sum\limits_{k=i'+1}^{K-1}w_{p_{k+1}p_k}\prod\limits_{k'=i'+1}^{k}(1-x^c_{p_{k'}}))\\
\end{matrix}
\]

Let $\lambda'_{(c,p)}=\lambda_{(c,p)}\prod_{k'=1}^{i'-1}(1-x^c_{p_{k'}})$, and $w_{(c,p)}=w_{p_{i'+1}p_{i'}}+\sum_{k=i'+1}^{K-1}w_{p_{k+1}p_k}\prod_{k'=i'+1}^{k}(1-x^c_{p_{k'}})$. We can figure out that $\lambda'_{(c,p)}$ is the actual request arrival rate for content $c$ at router $i$ through path $p$, and $w_{(c,p)}$ represents the upstream link cost caused by a request for content $c$ through path $p$, and thus we have
\[
\begin{matrix}
m^c_i(\mathbf{X})&=&\sum\limits_{(c,p)\in\mathcal{R}_i^c}\lambda'_{(c,p)}w_{(c,p)}\\
&=&\frac{\sum\limits_{(c,p)\in\mathcal{R}_i^c}\lambda'_{(c,p)}}{\sum\limits_{(c,p)\in\mathcal{R}_i^c}\lambda'_{(c,p)}}\sum\limits_{(c,p)\in\mathcal{R}_i^c}\lambda'_{(c,p)}w_{(c,p)}\\
&=&\sum\limits_{(c,p)\in\mathcal{R}_i^c}\lambda'_{(c,p)}\cdot(\sum\limits_{(c,p)\in\mathcal{R}_i^c}\frac{\lambda'_{(c,p)}}{\sum\limits_{(c,p)\in\mathcal{R}_i^c}\lambda'_{(c,p)}}w_{(c,p)})
\end{matrix}
\]
Let $\lambda^c_i=\sum_{(c,p)\in\mathcal{R}_i^c}\lambda'_{(c,p)}$ and $\bar{w}^c_i=\sum_{(c,p)\in\mathcal{R}_i^c}\frac{\lambda'_{(c,p)}}{\sum_{(c,p)\in\mathcal{R}_i^c}\lambda'_{(c,p)}}w_{(c,p)}$. Thus, the former, $\lambda^c_i$, represents requests arrival rate for content $c$ at router $i$, while $\bar{w}^c_i$ represents the average per-request upstream link cost. In conclusion, we have $$m^c_i(\mathbf{X}) = \lambda^c_i\bar{w}^c_i$$


\end{document}